\documentclass[letterpaper,twocolumn,10pt]{article}
\usepackage{usenix}

\usepackage{amsmath}
\usepackage{xspace}
\usepackage{amssymb,amstext}
\usepackage{mathtools}
\usepackage{pgf}
\usepackage{makecell}

\usepackage{array}
\usepackage{xurl}
\usepackage{amsthm}
\newtheorem{theorem}{Theorem}
\usepackage{algorithm}
\usepackage[noend]{algpseudocode}
\usepackage{multirow}
\usepackage{booktabs}

\newcommand{\system}{Cloak\xspace}

\begin{document}

\title{Cloak: Heuristic ORAM Optimization Through Fixed Temporal Distribution}

\author{
{\rm Onur Eren Arpaci}\\
University of Waterloo
\and
{\rm Florian Kerschbaum}\\
University of Waterloo
\and
{\rm Sujaya Maiyya}\\
University of Waterloo
}

\maketitle

\begin{abstract}
Encrypted cloud storage can hide data contents but still leak sensitive information through access patterns. ORAM addresses this by hiding access patterns, but existing ORAM systems are too inefficient to deploy in practice. We present Cloak, an oblivious storage system that dramatically improves performance by leveraging a simple, widely observed property of real workloads: \emph{temporal locality}, where recently accessed items are more likely to be accessed again soon. Instead of trying to make server accesses look perfectly uniform, Cloak makes server traffic follow a fixed, “recentness-biased” pattern and then uses real queries to fill as much of that traffic as possible. When the workload exhibits temporal locality, Cloak achieves overheads as low as $1.1\times$ over a non-oblivious and unencrypted baseline. Importantly, this heuristic affects only performance, not security. We evaluate Cloak on Netflix click-stream and Ethereum transaction traces, achieving 165{,}000 and 157{,}000 operations per second, respectively, on a single machine.
\end{abstract}

% no keywords

\section{Introduction}

Cloud storage plays a critical role for many businesses by providing on-demand scalability 
to accommodate fluctuating traffic volumes and storage needs. This scalability can substantially reduce costs compared to traditional on-premises deployments, where system administrators must provision for peak load from the start. However, outsourcing storage 
in plaintext to a potentially untrusted cloud provider risks exposing sensitive information. 
While encrypting data acts as a first layer of defense, encryption alone is insufficient to
protect data privacy from powerful adversaries who can observe the accesses made to encrypted 
data. Specifically, access-pattern attacks~\cite{islam2012access, cash2015leakage, ioannis2020seal, grubbs2018pump, grubbs2019learning, gui2019encrypted, kellaris2016generic, kornaropoulos2019data, lacharite2018improved, simon2021hiding, simon2022ihop, poddar2020practical, yupeng2016all} exploit features such as query frequencies, correlations across queries, and partial auxiliary knowledge to infer plaintext values in encrypted databases.

Oblivious Random Access Memory (ORAM)~\cite{oram} is a cryptographic primitive designed to prevent such attacks by ensuring that the access pattern observable by the storage server is independent of the client's true queries. Works such as  ~\cite{oram, larsen2018yes, persiano2019lower} show that any ORAM scheme must incur at least $\mathcal{O}(\log N)$ overhead in computation and communication, where $N$ is the dataset size. Informally, this implies that for every real 
query issued by a client, the scheme must access $\log N$ elements from the remote storage 
to hide the real element requested by the client. 

While many constructions (fully or nearly) achieve the asymptotic bound~\cite{stefanov2018path, chakraborti2018concuroram, natacha2018obladi, dauterman2021snoopy, sujaya2022quoram, sahin2016taostore, stefanov2013oblivistore, asharov2020optorama, ren2015constants, chakraborti2019concuroram}, this lower bound is still prohibitively expensive as the state of the art implementations still incur between $40\times$ and $100\times$ slowdown compared to the insecure baseline~\cite{sahin2016taostore, dauterman2021snoopy}. These systems typically rely on either a trusted proxy or trusted execution environments (TEEs) between clients and the storage server to run protocol logic. Some proposals scale ORAM horizontally to achieve reasonable throughput and latency~\cite{dauterman2021snoopy, setayesh2025treebeard}, but this primarily trades performance cost for monetary cost. For example, Treebeard~\cite{setayesh2025treebeard} reports 150{,}000 operations per second, but requires 16 machines to do so. As a result, ORAM continues to face a significant cost barrier that limits real-world adoption.

A separate line of work~\cite{grubbs2020pancake, maiyya2023waffle} explores weakening ORAM-style security in exchange for better performance. Given the limited deployment of traditional ORAM, this direction is pragmatic: partial protection can be preferable to none. Pancake~\cite{grubbs2020pancake} uses frequency smoothing and selective duplication to make the server-observed access distribution uniform. However, their threat model requires that clients
draw queries independently from a known frequency distribution. Oya and Kerschbaum~\cite{simon2022ihop} show this assumption is often unrealistic and can be exploited 
on real workloads consisting of correlated queries. Waffle~\cite{maiyya2023waffle} obfuscates 
access patterns by sending dummy requests to the least recently accessed elements, and exposes parameters that trade security for performance. However, the concrete security meaning of these parameters and, how exploitable the resulting leakage is in practice, is not clear to users.

To summarize, ORAM-based oblivious schemes preserve strong security guarantees but incur 
prohibitively high overheads. In contrast, recent works~\cite{grubbs2020pancake, 
maiyya2023waffle} achieve improved performance by weakening security, but either make strong 
assumptions about the input workload or have privacy leakages whose practical impact is 
unclear.

To address these shortcomings, we introduce \emph{\system}, a trusted proxy-based 
oblivious storage system that incurs overheads as low as $1.1\times$ relative to simple unencrypted 
storage on real-world traces, while leaking no information about the underlying access 
pattern. We make two key observations that explain why Cloak is feasible.

First, existing oblivious schemes typically aim to make adversary-observable accesses to the outsourced database \emph{uniform}. 
However, real-world workloads rarely access all database elements uniformly; instead, they exhibit temporal skew (or temporal locality), 
where recently accessed elements are more likely to be accessed again than elements that have 
remained untouched for long periods. This assumption is pervasive in computer systems, take 
for example caching, which is effective precisely because temporal locality holds. 
Intuitively, when client accesses are skewed, forcing a uniform server access 
distribution incurs high overhead, measured by the number of dummy queries per real query.

Second, we observe that uniformity is not the only way to achieve obliviousness. 
Obliviousness only requires that the server-observed access pattern be \emph{independent} 
of the true access pattern, implying that the observed pattern can follow any fixed 
distribution as long as it remains independent of the client access distribution.

Motivated by these observations, Cloak forces the server accesses to always
follow a \emph{temporally skewed} distribution. If and when client queries exhibit 
similar temporal skew, most server requests can be filled with real queries. If not, 
the proxy fills the remaining requests with dummy queries. Our goal is to improve the 
efficiency of oblivious access for real-world distributions by minimizing wasted
computation and bandwidth due to dummy queries. But \textit{importantly, even if the client access 
distribution deviates from this assumption, Cloak’s security guarantees remain unchanged}, unlike schemes like Pancake~\cite{grubbs2020pancake}.

\textit{\textbf{Overview:}}
Cloak generates temporally distributed server-side accesses by 
aggregating client requests and issuing batched queries to the server 
according to a deterministic batching rule. This batching algorithm 
must satisfy the following three requirements:
\begin{enumerate}
    \item Generate batches that conform to a specified, configurable temporal distribution.
    \item Maximize the fraction of real queries per batch when client access patterns closely match the configured distribution.
    \item Hide the true client access pattern from the server.
\end{enumerate}

To enforce a temporal pattern, we define \emph{reuse distance} of an stored element as the number of batches sent by the proxy since that element was last accessed. When the proxy includes an element in a batch, that element's reuse distance becomes 0, while the reuse distance of all other elements increases by 1. Cloak partitions all stored elements into disjoint \emph{reuse-distance sets} where elements within a set have the same reuse distance. When forming a batch, the proxy 
selects a fixed number of elements from each reuse-distance set; we call these fixed counts \emph{budgets}. For instance, if the budget for reuse distance 1 is 10 and the budget for reuse distance 2 is 5, when 
forming a batch, the proxy selects exactly 10 elements from the reuse-distance set 1, and exactly 5 elements from the set 2. These predetermined budgets, known to the adversary,
ensure that every batch follows the same (fixed) temporal access pattern.

The database admin can configure budgets per reuse-distance set, with each set having
a minimum budget of 1. This enables Cloak to meet the second requirement: if real client 
queries are temporally skewed in a way that matches the budgets, then Cloak fills most budget slots with real requests instead of dummies.

To satisfy the third requirement (hiding the true pattern), after each batch, the proxy re-encrypts the accessed values, shuffles the elements within the batch, and writes back the freshly encrypted, permuted batch to the server. This forms the most recent reuse-distance set. Executing the above steps hides which specific elements were accessed because, from the server's perspective, elements within the same reuse-distance set are indistinguishable. Consequently, the only information the server can observe is how many items were drawn from each reuse-distance set, which is dictated by the budgets, making the observed access pattern independent of the real one.

This design introduces two additional security considerations. First, because the real query distribution affects how many dummy requests are needed, it can also affect the number of batches sent to the server over a fixed time window. This creates \emph{volume/throughput leakage}: for example, unusually high traffic could indicate that client accesses are less temporally skewed (requiring more dummy filling). Attacks such as ~\cite{kellaris2016generic, gui2019encrypted, poddar2020practical, demertzis2020seal, grubbs2018pump, markatou2019full, kornaropoulos2021response, faber2015rich, falzon2022range, markatou2023attacks} exploit volume patterns to infer information about
private queries. Cloak mitigates the volume/throughput leakage by sending batches at fixed time intervals.  
To keep our scope focused, we adopt the simplest (but also most expensive) approach by fixing the batch rate to a constant. More efficient alternatives such as padding to the nearest power of two~\cite{ioannis2020seal} or adding differentially private noise can easily be integrated into Cloak's design.

Second, standard ORAM threat models often allow the adversary to act 
as a client: the adversary can inject queries, measure request-response 
times, and monitor traffic between clients and the proxy. 
We adopt a slightly restricted threat model in which the adversary may inject queries but \textit{cannot observe client-proxy traffic nor measure request-response times}.
This threat model is stronger than the adversaries considered in
\cite{grubbs2020pancake, maiyya2023waffle} who cannot inject queries. We argue that this
model matches many real deployments in which all users are authenticated and authorized before receiving any data. Examples include internal enterprise databases, government systems restricted to specific officials, hospital systems restricted to clinicians, and research databases restricted to approved lab members.

We implement Cloak and evaluate it on Netflix click-stream~\cite{follows2021netlix} and Ethereum transaction traces~\cite{day2018ethereum}. 
On these workloads, Cloak retains $91\%$--$94\%$ of the throughput of an unsafe, non-oblivious baseline, reaching up to 169{,}344 ops/s on a single machine. 
Compared to Treebeard~\cite{setayesh2025treebeard}, a tree-based ORAM system, Cloak achieves $2.8\times$--$7.6\times$ higher throughput and $2\times$--$5.5\times$ lower latency, 
and it outperforms Waffle~\cite{maiyya2023waffle} by $4.7\times$--$6.6\times$ in throughput despite Waffle providing strictly weaker security guarantees.

\section{Related Work}
\label{sec:related-work}

An ORAM~\cite{oram} protocol aims to hide a client’s access pattern from untrusted memory, ensuring that the server’s observed sequence of physical accesses is independent 
of the logical access sequence. Goldreich and Ostrovsky establish the $\mathcal{O}(\log N)$ asymptotic overhead for complete obliviousness, motivating a large body of work that 
optimizes bandwidth, latency, and client storage under the standard ORAM threat model
\cite{stefanov2018path, ren2015constants, chakraborti2018concuroram, sahin2016taostore,natacha2018obladi,sujaya2022quoram, setayesh2025treebeard, bindschaedler2015practicing}. 
Among the most influential practical designs are tree-based ORAMs, including Path ORAM \cite{stefanov2018path}, which popularized a simple binary-tree organization with a client-side stash and position map, and later optimizations such as Ring ORAM~\cite{ren2015constants} and Circuit ORAM~\cite{wang2015circuit} that reduce constants and/or improve the scheme's efficiency. More recent theoretical progress (e.g., optimal-amortized constructions~\cite{asharov2020optorama}) shows the limits of 
what can be achieved asymptotically, however their behavior in practice is yet to be
studied.

A parallel line of work integrates ORAM into storage systems and distributed datastores. 
ObliviStore~\cite{stefanov2013oblivistore} and Treebeard~\cite{setayesh2025treebeard} demonstrate that careful systems design (e.g., asynchronous I/O and distributed storage) can significantly improve throughput while retaining strong ORAM-style security guarantees. TaoStore~\cite{sahin2016taostore} formalizes and 
mitigates security pitfalls that arise under asynchronous scheduling and concurrent access, 
providing a non-blocking tree-based design suitable for proxy-mediated deployments. Subsequent work 
such as QuORAM \cite{sujaya2022quoram} brings replication and fault tolerance to ORAM-backed 
datastores, showing how to preserve obliviousness while providing quorum-style availability and 
consistency. However, as proven in~\cite{oram, larsen2018yes, persiano2019lower},
any ORAM scheme must incur at least $\mathcal{O}(\log N)$ overhead, which can be unacceptable to 
support the scale of existing applications that serve tens of millions of
requests~\cite{twitter,bronson2013tao,decandia2007dynamo,atikoglu2012workload}.

\begin{figure}[t]
    \includegraphics[width=1\columnwidth]{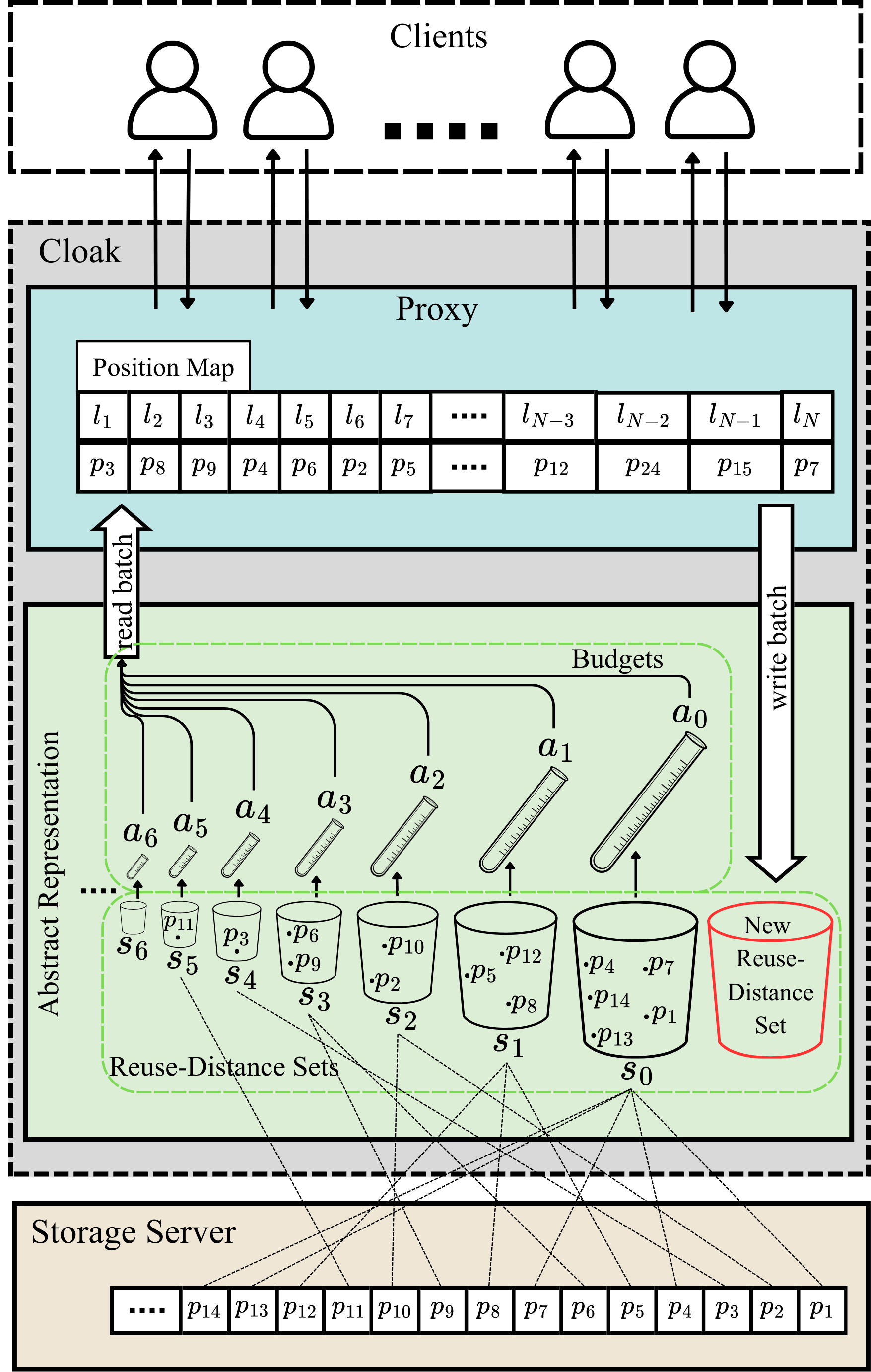}
    \caption{Cloak Overview}
    \label{fig:protocoloverview}
\end{figure}

Recent systems show that relaxing the adversarial model can unlock orders of magnitude 
higher performance than traditional ORAM. Pancake~\cite{grubbs2020pancake} introduces a 
passive persistent adversary model (the server observes accesses but cannot inject its own) and achieves small constant-factor overhead using \emph{frequency smoothing}. One of Pancake's main drawbacks is assuming that queries are selected independently from a frequency distribution, which ignores the natural correlations between queries. Using this vulnerability, IHOP~\cite{simon2022ihop} successfully attacks Pancake with real-world data. 
Moreover, Pancake's security holds as long as its anticipation of the client
query distribution matches the actual client query distribution. 

Waffle~\cite{maiyya2023waffle} removes Pancake's strict requirement of knowing the
input distribution a priori by providing \emph{online} protection under the same passive 
persistent model. It adapts to the observed sequence while maintaining constant 
bandwidth overhead and a bounded cache, and exposes tunable parameters that trade 
security for performance. However, the concrete security implications of poorly chosen 
parameter settings are unclear to users.

Cloak differentiates itself from Pancake~\cite{grubbs2020pancake} and Waffle~\cite{maiyya2023waffle} by decoupling performance from security. Both Pancake and Waffle trade-off security for performance, Pancake implicitly by assuming non-correlation between queries, and Waffle explicitly by providing parameters to tune security and performance. Whereas, Cloak leaks no information about the underlying query distribution. Unlike Pancake, a mismatch in anticipated query distribution can affect Cloak's performance but not its security.

\section{System and Threat Model}

\subsection{System Model} 

We model \system as a remote random access memory (RAM) supporting single address \texttt{READ} and \texttt{WRITE} operations.
Similar to commonly employed architecture in many oblivious systems \cite{stefanov2018path, natacha2018obladi, sujaya2022quoram, sahin2016taostore, grubbs2020pancake, maiyya2023waffle},
\system utilizes a trusted proxy to obliviously access encrypted and outsourced elements from a remote storage server. 
The proxy acts as a router between the clients and the storage server and executes the oblivious algorithm proposed in this work. 
Clients treat proxy as a simple remote RAM and send \texttt{READ} and \texttt{WRITE} queries to proxy. 
Similarly, the proxy treats the storage server as a remote RAM that accepts batch queries.
As part of the protocol, the proxy batches client queries before interacting with the storage server. 
Batches have a predetermined fixed size and are sent at fixed time intervals for security, both of which are configurable. 
While we assume the proxy to be always available, a real-world deployment can choose
techniques such as a primary-secondary replication~\cite{ghemawat2003google}
or a quorum replication~\cite{lamport2019part} to enforce high availability; however, this choice
is orthogonal to \system's goals.

\subsection{Threat Model}
Cloak protects organizational data from honest but curious adversaries who adhere to the protocol but may try to learn some private information about the data or the queries by persistently observing access patterns to the encrypted remote storage. 
Specifically, the adversary can monitor the accesses to the storage server over time and leverage these access traces to mount attacks such as frequency analysis, $l_p$-optimization, and quadratic query recovery 
\cite{lacharite2015note, li2017Information, naveed2015inference, simon2022ihop}. The adversary can also observe or delay (encrypted) messages between the proxy and the storage server and can inject queries. However, it cannot observe the responses to the 
injected queries nor observe or control the communication between clients and the proxy, 
including request-response times. Although weaker than traditional ORAM threat model
\cite{oram}, this threat model is significantly stronger than
than the \textit{snapshot} adversaries who can only access 
snapshots of the database without observing query accesses
~\cite{lacharite2017frequency, poddar2016arx, papadimitriou2016big}, and is stronger
than the passive persistent adversary~\cite{grubbs2020pancake, maiyya2023waffle}
who cannot inject queries.

\section{Protocol}
\label{sec:protocol}

\begin{table}[!tb]
\centering
\caption{Data Structure and Helper Function Definitions}
\begin{tabular}{>{\centering\arraybackslash}m{2.2cm}>{\arraybackslash}m{5.5cm}}
\toprule
\vspace{0.1cm} \textbf{Name} & \textbf{Definition} \\ \midrule[\heavyrulewidth]
$pmap$ & \vspace{0.1cm} Bi-directional map between logical and physical addresses. Each physical address also stores the most recent batch ID that accessed it. \vspace{0.1cm} \\ \hline
$reuse\_dist\_sets$  & \vspace{0.1cm} A list of sets with each set tracking the logical addresses with the same reuse distance. \vspace{0.1cm} \\ \hline
{\small $reuse\_dist\_queues$}  & \vspace{0.1cm} A list of queues per reuse-distance set holding pending (logical) addresses to be served.\\ \hline
Init & \vspace{0.1cm} Initializes $pmap$ and $reuse\_dist\_sets$ by (i) mapping each logical address to a random physical address, and (ii) randomly partitioning the logical addresses across reuse-distance sets so that $|s_t| = \sum_{i=t}^{len(budgets)} a_i$. \vspace{0.1cm} \\ \hline
ReadQuery & \vspace{0.1cm} Reads and parses a client query from $client\_conn$. \vspace{0.1cm} \\ \hline
SendResponse & \vspace{0.1cm} Sends a response value to a client over $client\_conn$. \vspace{0.1cm} \\ \hline
Forward & \vspace{0.1cm} Forwards a client query to the other coroutine via $query\_channel$. \vspace{0.1cm} \\ \hline
NextQuery & \vspace{0.1cm} Receives the next forwarded query from $query\_channel$. \vspace{0.1cm} \\ \hline
MapToPhysical & \vspace{0.1cm} Maps a list of logical addresses to their current physical addresses using $pmap$. \\ \hline
\makecell{Random-\\Permutation} & \vspace{0.1cm} Returns a random permutation over $batch\_size$ items. \vspace{0.1cm} \\ \hline
\makecell{StorageRead\\Write} & \vspace{0.1cm} Given a list of physical addresses, a permutation, and a list of \texttt{WRITE} queries: (i) reads the requested elements from the server, (ii) applies the writes, (iii) re-encrypts and permutes the elements, (iv) writes them back to the server, and (v) returns the decrypted read values. \vspace{0.1cm} \\ \bottomrule
\end{tabular}
\label{tab:definitions}
\end{table}

\subsection{Protocol Description}

This section describes \system's oblivious access protocol executed by
the client-side trusted proxy, with Figure~\ref{fig:protocoloverview} providing an overview.
The proxy routes client queries to the storage server in sequential \emph{batches}, each with a monotonically increasing id. 
A batch is a list of \emph{physical} addresses requested from the server; these addresses may correspond to real client queries or to dummy queries.

The proxy maintains a \emph{position map} (shown at the top of Figure~\ref{fig:protocoloverview}) that tracks the bidirectional mapping between logical addresses (used by clients) and physical addresses (used by the server), along with the batch id in which each physical address was last accessed.
The proxy also maintains a \emph{cache} of recently written values, allowing it to serve read queries for recently accessed data without waiting for the next batch.
The proxy operates on a fixed batching interval: it collects incoming client queries for $batch\_interval$ seconds, then sends a batched request to the server.
After receiving the server's response, the proxy serves the pending client queries, re-encrypts and shuffles the elements within the batch, updates the position map to reflect the shuffle, and writes the newly encrypted elements back to the server.

To enforce a target temporal access pattern, the proxy tracks the \emph{reuse distance} of the physical addresses.
Addresses with the same reuse distance form a reuse-distance set $s_t$, where $t$ denotes the reuse distance (depicted as buckets in Figure~\ref{fig:protocoloverview}). 
Each time the proxy sends a batch, it creates a new set $s_0$ such  that the id of every existing set increments by 1 (i.e., $s_t$ becomes $s_{t+1}$). 
The proxy assigns a fixed per-batch budget $a_t$ to each set $s_t$ (depicted as test tubes in Figure~\ref{fig:protocoloverview}) and in every batch, the proxy includes \emph{exactly} $a_t$ addresses from $s_t$. 
We require $a_t \ge 1$ for all $t$ to ensure that every portion of the database remains accessible.
 
Clearly, the notion of reuse-distance sets and the number of elements within
each set are known to the server (as the sets operate on physical addresses).
However, the encrypted elements within a reuse-distance set are indistinguishable 
to the server: after each batch, the proxy shuffles and re-encrypts the accessed elements before writing them back, and whenever an address is accessed,
it is removed from its previous reuse-distance set, so the remaining elements in that set stay indistinguishable. 
Because the proxy accesses a predetermined number of addresses (i.e., elements) from each reuse-distance set in every batch, the observable access pattern becomes fully deterministic. 
By allocating larger budgets to smaller reuse distances (more recent sets), we obtain low overhead on workloads with skewed temporal locality. 
In other words, if real queries have a similar temporal access pattern to the pattern implied by the budgets sizes, this batching algorithm ensures that most batch entries correspond to real queries, while hiding clients' access patterns.

\begin{algorithm}[t]
\caption{Get Client Queries}
\label{alg:getclientrequests}
\begin{algorithmic}[1]
\Procedure{GetClientQueries}{$client\_conn,$ $cache$, $query\_channel$}
    \Loop
        \State $query \gets$ ReadQuery($client\_conn$)
        \If{$query.type = \text{READ}$}
            \If{$cache.contains(query.address)$} \label{line:checkcache}
                \State \parbox[t]{0.9\linewidth}{SendResponse($client\_conn$, \\ \hspace*{2.5cm}$cache[query.address]$)}
            \Else
                \State Forward($query\_channel, query$)
            \EndIf
        \ElsIf{$query.type = \text{WRITE}$}
            \State $cache[query.address] \gets query.value$
            \State Forward($query\_channel, query$)
            \State SendResponse($client\_conn, \text{OK}$) \label{line:writecacheresponse}
        \EndIf
    \EndLoop
    \State $client\_conn.close()$
\EndProcedure
\end{algorithmic}
\end{algorithm}

The proxy runs two co-routines, \textsc{GetClientQueries} and \textsc{ProcessBatches}, to process client queries while concurrently interacting with the server.
Algorithms~\ref{alg:getclientrequests} and~\ref{alg:processrequets} provide the pseudocode and Table~\ref{tab:definitions} defines the data structures and helper functions.
\textsc{GetClientQueries} waits for a client query indefinitely. 
Upon receiving a \texttt{WRITE} query, it updates the cache, immediately acknowledges the client, and forwards the query to \textsc{ProcessBatches} via a buffered message channel.
Unlike systems like Pancake~\cite{grubbs2020pancake}, \system uses configurable fixed size cache with any eviction policy. Evicting dirty entries does not
impact correctness because each write is forwarded to \textsc{ProcessBatches}, which ensures it is eventually written to the server.
For a \texttt{READ} query, \textsc{GetClientQueries} first checks the cache and replies immediately on a hit; otherwise, it forwards the query to \textsc{ProcessBatches}.

\textsc{ProcessBatches} processes forwarded queries in a loop, tracking pending queries for each physical address in the $query\_map$ data structure (Algorithm~\ref{alg:processrequets}, line~4).
When a query arrives, if no other query for the same address is already pending, \textsc{ProcessBatches} enqueues it into the queue corresponding to its reuse-distance set (lines 8-11).

If another query exists for the same address, the algorithm handles the query without enqueuing it, for two reasons: (i)~each batch can include a given address at most once, so duplicate queries to the same address cannot be issued to the server, and 
(ii)~to minimize client latency, we respond to queries as soon as possible rather than deferring them to a future batch. Concretely, if the incoming query is a \texttt{READ} and another \texttt{READ} for the same address is already pending, the algorithm \emph{coalesces} them by appending the new request to the existing entry in $query\_map$ (line~\ref{line:coalesce}). All coalesced reads lead
to one physical read from the server and will be answered together once the address is fetched.
If the incoming \texttt{READ} finds a pending \texttt{WRITE} for the same address, the algorithm responds immediately using the buffered write value (line~\ref{line:writequeueresponsestart}).
Whereas, if the incoming query is a \texttt{WRITE}, it responds to all pending \texttt{READ}s for that address with the new value and replaces any previously pending \texttt{WRITE} (line~\ref{line:writequeueresponsestop}).
These optimizations do not violate linearizability, as we discuss formally
in Theorem~\ref{linearizability}.

If clients inject queries at a rate that exceeds the proxy’s processing capacity, requests will
accumulate in \texttt{query\_map}, leading to large queues. In extreme (and unlikely) cases, this
could cause excessive memory consumption at the proxy. To prevent this, \textsc{ProcessBatches}
bounds the size of \texttt{query\_map} and applies backpressure by blocking when the map is full,
causing the proxy to wait until an in-flight batch completes before accepting additional queries
(lines~20--21).

Independently, \textsc{ProcessBatches} checks whether $batch\_interval$ seconds have elapsed since the last batch was sent. If so, it constructs the next batch by popping up to $a_t$ addresses from the queues maintained for each reuse-distance. 
If a queue contains fewer than $a_t$ addresses, the proxy selects random addresses
for dummy queries from the corresponding reuse-distance set to fill the remaining budget.

\textsc{ProcessBatches} then sends a batched read to the server. 
Upon receiving the response, it replies to any pending client in $query\_map$,
applies any writes, re-encrypts and shuffles the elements in the batch, updates the position map, writes the permuted ciphertexts back to the server, and resets the batch timer. The addresses written back in the batch become reuse-distance set 0, and the process continues to handle the next set of incoming client queries.

\begin{algorithm*}[!tb]
\caption{Process Batches}
\label{alg:processrequets}
\begin{algorithmic}[1]
\Procedure{ProcessBatches}{$query\_channel, client\_conn$}
    \State $pmap$, $reuse\_dist\_sets \gets$ Init()
    \State $cur\_batch\_id \gets t-1$
    \State $query\_map \gets \emptyset$ \Comment{A map of address $\to$ pending reads or a pending write}
    \State $last\_batch\_time \gets time.now()$
    
    \While{receiving queries from $query\_channel$}
        \State $query \gets$ NextQuery($query\_channel$)
        
        \If{$query.address \notin query\_map$}
            \State $query\_reuse\_dist \gets cur\_batch\_id - pmap.lastBatchId[query.address]$
            \State $reuse\_dist\_queues[query\_reuse\_dist].insert(query.address)$
            \State $query\_map[query.address] \gets [query]$        
        \ElsIf{$query.type = \texttt{READ}$} \Comment{Coalesce reads / satisfy from pending write}
            \If{there is a pending \texttt{WRITE} in $query\_map[query.address]$} \label{line:writequeueresponsestart}
                \State SendResponse($client\_conn$, $query.id$, pending \texttt{WRITE} value)
            \Else
                \State $query\_map[query.key].append[query]$ \label{line:coalesce}
            \EndIf
        \ElsIf{$query.type = \texttt{WRITE}$}
            \State Respond to all pending \texttt{READ} queries in $query\_map[query.address]$ with $query.value$ \label{line:writequeueresponsestop}
            \State $query\_map[query.address] \gets [query]$
        \EndIf
        \If{$query\_map.len() \ge max\_queue\_capacity$}
            \State Sleep($batch\_interval - (time.now() - last\_batch\_time)$)
        \EndIf
        \State $is\_timeout \gets time.now() - last\_batch\_time > batch\_interval$
        \If {$is\_timeout$}
            \State $batch\_addresses \gets \emptyset$
            \State $batch\_write\_queries \gets \emptyset$
            \For{each $(reuse\_dist\_set, reuse\_dist\_queue)$ in $(reuse\_dist\_sets, reuse\_dist\_queues)$}
                \State $budget\_queries \gets$ Pop queries from reuse\_dist\_queue up to the budget or until the queue is emptied
                \State $budget\_queries \gets$ Add dummy queries from reuse\_dist\_set to fill remaining budget \label{line:adddummies}
                \State remove $budget\_queries$ from the reuse\_dist\_set
                \State $batch\_addresses.extend(budget\_queries.addresses)$
                \State $batch\_write\_queries.extend(budget\_queries.filter(type==\texttt{WRITE}))$
            \EndFor
    
            \State $physical\_addresses \gets$ MapToPhysical($pmap$, $batch\_addresses$)
            \State $permutation \gets$ RandomPermutation($|batch\_addresses|$)
            \State $responses \gets$ StorageReadWrite($physical\_addreses, permutation, batch\_write\_queries$)
            
            \State Send responses to clients for each \texttt{READ} query in $query\_map$ whose address was in $batch\_addresses$ \label{line:responsefromserver}
            \State Update $pmap$ with new physical addresses and $lastBatchId$
            \State $reuse\_dist\_sets.append(batch\_addresses)$
            \State $cur\_batch\_id \gets cur\_batch\_id + 1$
        \EndIf
    \EndWhile
\EndProcedure
\end{algorithmic}
\end{algorithm*}

\subsection{Dummy Selection Optimization}
\label{subsec:dummy-selection}

When constructing a batch, if the queue of pending real queries for a reuse-distance set $s_t$ contains fewer than $a_t$ entries, the proxy fills the remaining slots with dummy queries to randomly drawn addresses from $s_t$ (Algorithm~\ref{alg:processrequets}, line~\ref{line:adddummies}). Recall that addresses within the same reuse-distance set are indistinguishable to the server after shuffling and re-encryption. Therefore, the choice of \emph{which} addresses serve as dummies does not affect security, only the \emph{number} of addresses drawn from each set matters. This degree of freedom enables a performance optimization.

To implement this optimization, the proxy extends the position map to also track, for each logical address, the timestamp at which a client last issued a query for that address. Whenever a client query for an address arrives, this \emph{client-side access timestamp} is updated to the current time.

When selecting dummy queries from a reuse-distance set $s_t$, the proxy prefers addresses with the most recent client-side access timestamps. Formally, let $D_t \subseteq s_t$ denote the set of addresses available for dummy selection (i.e., hitherto unaccessed addresses in $s_t$). The proxy selects $k$ addresses from $D_t$ with the most recent client-side access timestamps, where $k = a_t - |Q_t|$ and $Q_t$ is the set of real queries targeting $s_t$ in the current batch.

\textbf{Rationale.}
This heuristic improves batch utilization under temporally skewed workloads. When an address for a dummy query is included in a batch, it moves to reuse-distance set $s_0$. Selecting recent client-accessed addresses as dummies preemptively repositions them into recent reuse-distance sets. Under temporal locality, recently accessed addresses are likely to be accessed again soon. When such an address is queried again, it will reside in a recent set $s_t$ with small $t$, which have larger budgets, improving the chances of serving
the query immediately in the subsequent batch. In effect, the dummy traffic anticipates future client demand, reducing the mismatch between real queries and available budget slots. Because addresses
within a reuse-distance set are indistinguishable, this optimization
is equivalent to choosing random addresses for dummy queries.

\section{Security and Correctness Analysis}
\label{sec:security}

\subsection{Security}
\label{sub:security}
In this section, we prove the security of Cloak in the semi-honest model. As mentioned in our threat model, the adversary can only observe the access traces to the remote storage server, so we can model the clients and the proxy as a single entity, and thus cache responses and query de-duplication do not bear any influence on the security proof. Since the batches are consistently sent at fixed time intervals, timing or amount of batches do not carry any private information. We assume the encryption scheme is IND-CPA secure so that ciphertexts are computationally indistinguishable from random numbers to any probabilistic polynomial-time (PPT) adversary, and we model the security of the protocol with the following game $\text{TRACE-IND}^{\Pi}_{\mathcal{A}}(\lambda)$:
\begin{itemize}
    \item The game chooses a challenge bit $b \leftarrow_{\text{\tiny{R}}} \{0,1\}$
    \item A PPT adversary chooses two arbitrary query sequences $Q_0$, $Q_1$ and sends them to the game.
    \item The game send $Q_b$ as the input queries to Cloak's proxy.
    \item The adversary starts observing batch traces sent to the storage server at each $batch\_interval$ seconds. Batch traces contain a list of physical addresses requested by the proxy and subsequently a list of encrypted values to be written to those physical addresses.
    \item Adversary can continue observing the batch traces up to $poly(\lambda)$ number of batches. (Protocol will continue sending batches even if $Q_b$ finishes)
    \item Finally, the adversary outputs a guess $b'$ for $b$. If $b'=b$, the adversary wins the game.
\end{itemize}
The adversary's advantage is defined as $Adv^{\Pi}_{\mathcal{A}} = |P(b'=b) - 1/2|$. We say that the protocol is secure if $Adv^{\Pi}_{\mathcal{A}}$ is negligible for every PPT adversary.

\begin{theorem}
For any PPT adversary $\mathcal{A}$ interacting in the game $\text{TRACE-IND}^{\text{Cloak}}_{\mathcal{A}}(\lambda)$, $\mathcal{A}$'s advantage is negligible assuming the underlying encryption scheme is IND-CPA secure.
\label{thm1}
\end{theorem}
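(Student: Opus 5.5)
The plan is a short hybrid argument. First, replace every ciphertext in the trace by an encryption of a fixed dummy value (say $0^\ell$), which IND-CPA security lets us do at negligible cost. Second, show that in this ``all-zeros'' world the distribution of the physical-address part of the trace is identical for $Q_0$ and $Q_1$.

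\textbf{Step 1: the ciphertext hybrid.} Define $H_b$ as the real game with challenge bit $b$. Define $H'_b$ as the same game, except that every ciphertext written back in StorageReadWrite is an encryption of $0^\ell$ under a fresh key. We bound $|\Pr[\mathcal{A}\text{ outputs }1\text{ in }H_b]-\Pr[\mathcal{A}\text{ outputs }1\text{ in }H'_b]|$ by a reduction $\mathcal{B}$ to multi-message IND-CPA, which standard hybrid arguments reduce to single-message IND-CPA with a $poly(\lambda)$ loss.

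$\mathcal{B}$ runs the whole proxy itself, since the proxy's logic needs no key: it only needs to know which logical address sits at which physical address. It submits the pair (true plaintext, $0^\ell$) to its encryption oracle each time a block is re-encrypted. It also uses its oracle to produce the initial encrypted database. The number of encryptions is at most $poly(\lambda)\cdot\sum_t a_t$ plus $N$, which is polynomial.

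\textbf{Step 2: the address trace is independent of the queries.} This is the combinatorial core. I would prove by induction on the batch index $j$ the following invariant. Conditioned on the adversary's view up to batch $j$, namely the physical address sets $S^{(j)}_t$ of each reuse-distance set and the addresses requested so far, the assignment of logical addresses to physical addresses within each $S^{(j)}_t$ is a uniformly random bijection. Moreover, this assignment is independent of the view, and its law does not depend on $Q_b$.

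\emph{Base case.} Init maps logical addresses to uniformly random physical addresses and partitions them at random, with the sizes $|s_t|$ fixed.

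\emph{Inductive step.} In batch $j+1$ the proxy chooses exactly $a_t$ logical addresses from each $s_t$, using a rule that depends on $Q_b$, on the cache, and on the dummy-selection heuristic of Section~\ref{subsec:dummy-selection}. However, the chosen set $L_t$ depends only on logical-level information, which is independent of the hidden bijection. Given any fixed $L_t$ of size $a_t$, the image of $L_t$ under a uniform bijection is a uniform $a_t$-subset of $S^{(j)}_t$. So the observed physical request set is uniform of fixed size, whatever $L_t$ is. Conditioning on that observed set, the bijection restricted to the remaining elements stays uniform, which gives the invariant for $s_{t+1}$.

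The accessed blocks are written back under a fresh uniform RandomPermutation into positions the adversary sees. Hence the bijection for the new $s_0$ is again uniform and independent of everything observed. Batch timing is fixed by $batch\_interval$ and continues after $Q_b$ ends, so it leaks nothing either.

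\textbf{Conclusion and main obstacle.} From Step 2, the views in $H'_0$ and $H'_1$ are identically distributed. Combining with Step 1 bounds the advantage by $2\cdot poly(\lambda)\cdot\mathrm{Adv}^{\text{IND-CPA}}$, which is negligible.

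I expect the main obstacle to be stating the inductive invariant precisely. In particular, the argument must make clear that the proxy's selection, including the timestamp-based dummy preference and the backpressure that delays queries, is a function only of data independent of the secret bijection. Only then does the uniform-image argument go through.

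A secondary subtlety is that the adversary may inject queries. I would handle this by folding injected queries into $Q_b$: the adversary chooses them, and the game only hides which sequence is being served. This is consistent with the threat model, since responses are not observed.
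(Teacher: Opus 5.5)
Your proposal is correct and follows the paper's route. The paper also uses an IND-CPA hybrid that replaces every ciphertext with an encryption of zeros, and then an induction over batches. That induction shows each batch draws exactly $a_t$ addresses from each $s_t$, and that these are uniform within the set because of the random initial mapping and the fresh permutation applied at write-back. Your invariant, that the within-set logical-to-physical bijection stays uniform given the view because the proxy's selection uses only logical-level data, is a sharper form of the paper's inductive claim, and your explicit reduction with its $poly(\lambda)$ loss fills in a step the paper only asserts.
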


\begin{proof}[Proof]
Let $H_0$ be the real game i.e., $\text{TRACE-IND}^{\text{Cloak}}_{\mathcal{A}}(\lambda)$. In $H_1$, we replace all encryptions and re-encryptions performed inside $StorageReadWrite$ with encryptions of a dummy value (e.g., zeros) of the same length. By IND-CPA security of the encryption scheme, any PPT adversary distinguishes $H_0$ and $H_1$ with only negligible advantage, so it suffices to argue indistinguishability in $H_1$.

Conditioned on $H_1$, the adversary only sees batch traces: the sizes of batches (fixed by design), the timing of batches (fixed by design), and the physical indices requested in each batch. Cache hits and de-duplication are invisible because the communication between the proxy and the clients is not visible to the adversary.

Fix any two query sequences $Q_0,Q_1$; we argue that the distribution of batch traces generated from $Q_0$ and $Q_1$ in $H_1$ are identical. The proof is done by induction over batches.
Let $pmap \colon \{l_1,l_2, \dots l_N\} \rightarrow \{(p_1,r_1), (p_2,r_2), \dots (p_N,r_N)\}$ be the position map, where $L \coloneq \{l_1,l_2, \dots l_N$\} are the logical addresses of the stored elements, $P \coloneq \{p_1,p_2, \dots p_N\}$ are the physical addresses of the stored elements in the storage server, and $R \coloneq \{r_1,r_2, \dots r_N\}$ are the batch index of the last batch that updated the corresponding physical address. Let $B^k \subset P$ denote the $k^{th}$ batch request by the proxy, and let $k_{cur}$ denote the current batch index. The storage is partitioned into disjoint reuse-distance sets. The reuse-distance set $s_t$ contains all physical addresses whose reuse distance is $t$, i.e., addresses last accessed $t$ batches ago: $s_t \coloneq \{ p_i \in P \mid k_{cur} - r_i = t \}$. The adversary can observe the batch requests $B^k \subset P$ sent to the server, and by tracking which addresses appear in each batch, it can derive all reuse-distance sets $s_t$.

\textbf{Base case:} Before the first batch, $pmap$ is created by randomly mapping each logical address $l_i$ to a physical address $p_j$ and assigning random batch indices that create reuse-distance sets with sizes dictated by the budgets. Let $B_0^0 \subset P$ and $B_1^0 \subset P$ be the first batches of $Q_0$ and $Q_1$ respectively. $|B_0^0| = |B_1^0|$ because of the fixed budget sizes. And due to the initial random mapping, $B_0^0$ and $B_1^0$ will be identically distributed. Therefore $Q_0$ and $Q_1$ cannot be distinguished by observing $B_0^0$ and $B_1^0$.

\textbf{Inductive step:} Assume batch traces up to batch $k-1$ are identically distributed. Let $B_0^k \subset P$ and $B_1^k \subset P$ be the $k^{th}$ batches of $Q_0$ and $Q_1$ respectively. Batches $B_0^k$ and $B_1^k$ will contain the same number of elements from each reuse-distance set, due to the predefined budgets. More formally: $|s_t \cap B_0^k| = |s_t \cap B_1^k| = a_t\ \forall{t}$. In addition, the distributions of $s_t \cap B_0^k$ and $s_t \cap B_1^k$ will be identical and independently uniformly random for all $t$. This holds because when a batch $B^k$ is processed, the logical addresses within that batch are randomly permuted before being written back, forming reuse-distance set $s_0$. As the current batch index advances, this set's reuse distance grows (from $s_0$ to $s_1$ to $s_2$, etc.), and each subsequent batch removes $a_t$ addresses from $s_t$. Since, by definition, a physical address is only accessed once while in a given reuse-distance set and then moved to $s_0$, all remaining elements in $s_t$ retain their fresh permutation. Thus, any equal-size subsets of $s_t$ are identically distributed. Therefore, $B_0^k$ and $B_1^k$ are identically distributed, and $Q_0$ and $Q_1$ cannot be distinguished by observing $B_0^k$ and $B_1^k$.

By induction, every batch trace is identically distributed across $Q_0$ and $Q_1$ in $H_1$, so the adversary's distinguishing advantage in $H_1$ is $0$. Combining with the IND-CPA hybrid transition yields only a negligible overall advantage in the real game. Therefore, the protocol is secure under the stated game.
\end{proof}

\subsection{Linearizability}

We define a history \textsf{Hist} as a sequence of logical read/write operations with each 
operation $op_i$ consisting of a pair of request-response events, \{\textsf{req$_i$, res$_i$}\}.
A \textit{complete} history is the one where every request event \textsf{req$_i$} in the history 
has a corresponding response event \textsf{res$_i$}; and otherwise
the history is said to be partial.

Let each read or write operation $op_i$ be represented as a tuple of $(l_i, v_i, u_i)$ where 
$l_i$ is the logical address of the element the operation targets, $v_i$ equals $\bot$ for
read operations and otherwise the element's value to be updated with, and $u_i$ is the response
value. Let $\leq_{lin}$ represent a linearizable relation between any two operations $op_i$ and
$op_j$: $op_i \leq_{lin} op_j$ implies \textsf{res}$_i$ precedes \textsf{req}$_j$ in a given 
history. Given a complete and finite history of operations executed by \system, this section 
proves \system is linearizable. 

Informally, linearizability implies that in the history, if an operation $op_i$ happens before
or precedes $op_j$, then $op_i$ should be ordered before $op_j$. This implies that if $op_j$ 
immediately succeeds $op_i$ and $op_i$ is a write, then $op_j$ should read the value written by 
$op_i$. Note that since linearizability is a per-element guarantee, $op_i$ and $op_j$ operate on 
the same logical element $l$. Formally, we say that the system is linearizable if there exists
a total order $\leq_{lin}$ over the operation identifiers such that: If \textsf{res$_i$} precedes 
\textsf{req$_j$} in the history such that no other write operation $op_k$ exists in between and 
if $v_i \neq \bot$, then $u_j = v_i$.

\begin{theorem}
\label{linearizability}
Cloak is linearizable.
\label{thm2}
\end{theorem}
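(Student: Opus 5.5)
We will prove linearizability by exhibiting an explicit total order $\leq_{lin}$ and showing it satisfies the stated condition. Since linearizability is a per-element property, we fix a logical address $l$ and restrict attention to operations on $l$. The natural candidate is to order operations by the point at which \system \emph{commits} them. A \texttt{WRITE} commits at the moment \textsc{GetClientQueries} updates $cache[l]$. This happens before the response OK is sent (line~\ref{line:writecacheresponse}), so the commit lies inside the write's request--response interval. A \texttt{READ} commits at the moment its returned value is determined, and we split into the possible response paths:
\begin{enumerate}
\item a cache hit (line~\ref{line:checkcache});
\item a pending \texttt{WRITE} in $query\_map$ (line~\ref{line:writequeueresponsestart});
\item an incoming \texttt{WRITE} answering all pending reads (line~\ref{line:writequeueresponsestop});
\item the server response after a batch (line~\ref{line:responsefromserver}).
\end{enumerate}
In every case the commit point is between the read's request and response, so ordering by commit points respects real-time precedence.

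The core invariant we will establish is: \emph{at every instant, the value returned by any read-path for $l$ equals the value of the most recently committed write to $l$}, or the initial value if none exists. We prove this by case analysis, using three facts.

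First, the cache entry for $l$, when present, holds the latest committed write. Writes update the cache synchronously, and eviction only removes entries.

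Second, every write is forwarded over the FIFO $query\_channel$ in commit order. \textsc{ProcessBatches} therefore replaces any older pending write with the newer one (line~\ref{line:writequeueresponsestop}). So $query\_map[l]$ holds either the latest forwarded write or only reads.

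Third, when a batch containing $l$ is processed, StorageReadWrite applies the pending write before re-encryption. The stored value is thus the latest write that has reached \textsc{ProcessBatches}. Coalesced reads returned from the server, or answered by a newly arriving write in path 3, receive exactly that value.

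Given the invariant, if $\textsf{res}_i$ precedes $\textsf{req}_j$ with no intervening write and $v_i\neq\bot$, then $op_i$ is the last committed write before $op_j$ commits. Hence $u_j=v_i$.

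The main obstacle is the window in which a write has updated the cache and answered its client but has not yet been processed by \textsc{ProcessBatches}. Consider a read that arrives after the write, misses the cache because the entry was evicted, and is forwarded. The read travels the same FIFO channel behind the write. So by the time \textsc{ProcessBatches} handles the read, the write is already in $query\_map[l]$ or already applied to storage, and the read returns the new value. The subtle case is a read that was pending \emph{before} the write and is answered by the write in path 3. Such a read is concurrent with the write, so linearizing it just after the write is permitted.

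We will also need to check that a batch in flight does not return a stale server value to a read forwarded after a newer write. Line~\ref{line:responsefromserver} answers only reads still present in $query\_map$, and a newer write would already have answered and cleared them. Formalizing this interleaving argument, under the assumption that the two coroutines interact only through the FIFO channel and that \textsc{ProcessBatches} is single-threaded, is the step requiring the most care.
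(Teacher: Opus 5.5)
\textbf{The gap.} Your linearization points are inconsistent across the two coroutines, so the core invariant you plan to prove is false. You place writes at their cache update in \textsc{GetClientQueries}. You place reads on paths 2--4 at the moment \textsc{ProcessBatches} fixes their value. But \textsc{ProcessBatches} can lag arbitrarily behind \textsc{GetClientQueries}: the channel is buffered, and \textsc{ProcessBatches} blocks inside \texttt{StorageReadWrite}. A write can therefore commit after a read has been forwarded but before that read's value is fixed.

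Concretely, let $l$ hold $v_0$ on the server and be uncached. A \texttt{READ} $R$ misses the cache and is forwarded. \textsc{ProcessBatches} enqueues it and, the interval having elapsed, sends a batch containing $l$. Meanwhile a \texttt{WRITE} $W$ of $v_1$ updates the cache, is forwarded, and is acknowledged. The batch returns and $R$ is answered with $v_0$ at line~\ref{line:responsefromserver}, because $W$ is still sitting in the channel. Your commit order puts $W$ before $R$, yet $R$ returns $v_0$.

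Path 2 fails the same way. Suppose $W_1$ of $v_1$ commits and is forwarded, $l$ is evicted, $R$ is forwarded, and then $W_2$ commits. \textsc{ProcessBatches} then dequeues $W_1$ and $R$ and answers $R$ with $v_1$, after $W_2$'s commit point. Path 3 fails too: a pending read is answered with $v_1$ when $W_1$ is dequeued, after $W_2$ has already committed. Your remark that ``a newer write would already have answered and cleared them'' conflates a write that has \emph{committed} with one that has been \emph{dequeued} by \textsc{ProcessBatches}. Likewise, $query\_map[l]$ holds the latest dequeued write, not the latest forwarded one.

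\textbf{The repair.} These executions are linearizable, since $R$ overlaps the later write. The fix is to anchor reads to channel order rather than to \textsc{ProcessBatches} time:
\begin{itemize}
\item linearize a read answered on path 2 or 4 at its \texttt{Forward} step;
\item linearize a read answered on path 3 immediately after the cache update of the write that answers it.
\end{itemize}
Your path-3 remark is the right idea, but it contradicts your stated commit rule. Both points lie inside the read's interval. For path 3 this uses a single sequential \textsc{GetClientQueries}, which your second fact already presumes: the read is forwarded before that write is handled, and the read's response comes after \textsc{ProcessBatches} dequeues the write.

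The invariant then becomes: a forwarded read receives the value of the last write preceding it in channel order (paths 2 and 4), or of the first write following it (path 3). This follows from three facts:
\begin{itemize}
\item delivery on the channel is FIFO;
\item \textsc{ProcessBatches} is single-threaded;
\item when no write is pending in $query\_map[l]$, the server holds the last dequeued write.
\end{itemize}

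For comparison, the paper never faces this issue. Its proof only treats the case where $op_j$ is the next operation on $l$ with nothing in between, and argues from the invariant that $v_i$ sits in the cache, the queue, or the server. Once repaired, your argument would establish the stronger, standard notion for concurrent histories.
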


\begin{proof}
    When $op_i$ precedes $op_j$ in the history and $op_i$ writes an element $l$ (i.e., $v_i \neq \bot$) with $op_j$ being the subsequent
    operation on $l$ with no other operation on $l$ in between, we show that $u_j = v_i$. To show this, we consider the following invariant:

    \textbf{Invariant.} A value $v_i$ written by $op_i$ will exist in at least one of the possible places: cache, queue, or server.

    This invariant holds because upon receiving the write operation $op_i$, the proxy adds $val_i$ 
    to the cache and forwards it to the \textsc{ProcessRequests} procedure where it gets queued. Since 
    \textsc{ProcessRequests} is single-threaded, only one process dequeues the requests atomically and 
    sends them to the server.  Note that we can ignore the time spent inside the $req\_channel$ as 
    the requests are sent sequentially by \textsc{GetClientRequests} and received sequentially by 
    \textsc{ProcessRequests}. Therefore, apart from possibly still exiting in the cache, $v_i$ will 
    either be in queue or written to the server. Given the invariant we examine each possible 
    situation with respect to the processing of the operation $op_j$:   
    \begin{itemize}
        \item \textbf{Stored in cache}. If $v_i$ is in cache, then $op_j$ will read $v_i$ from
        the cache as shown in Algorithm \ref{alg:getclientrequests}, line \ref{line:checkcache}.
        \item \textbf{Evicted from cache and waiting in queue}. If $v_i$ is already evicted from cache but still waiting in $query\_map$, then $op_j$ will get the response $v_i$ in Algorithm \ref{alg:processrequets}, lines \ref{line:writequeueresponsestart}--\ref{line:writequeueresponsestop}.
        \item \textbf{Evicted from cache, dequeued, and stored in the server}. If $v_i$ is only stored in the server, then $op_j$ will get the response $v_i$ in Algorithm \ref{alg:processrequets}, line \ref{line:responsefromserver} after the proxy retrieves
        the value from the server.
    \end{itemize} 
    Thus, $u_j = v_i$.
\end{proof}

\section{Budget Sizes for Reuse-distance Sets}
\label{sec:setting-budgets}

The protocol described in Section~\ref{sec:protocol} allows the database admin to choose an arbitrary 
budget distribution over reuse-distance sets (with a minimum budget of 1 for any set).
While the choice of budget distribution has no impact on the security of the protocol,
as shown in Theorem~\ref{thm1}, it critically impacts Cloak’s performance.
Cloak hides the true access pattern under a fixed target temporal access pattern observed by the
server. The closer the target distribution matches the client access distribution, the 
better its performance because \system can then populate each batch primarily with 
real queries rather than dummy queries. Since real-world workloads exhibit temporal skewness, i.e., recently 
accessed items are likely accessed again, it is natural to assign larger budgets to more 
recent reuse-distance sets. Determining an appropriate budget distribution, however, is non-trivial.

We hypothesize that Zipf’s law approximately characterizes the temporal locality of real-world access 
patterns. To evaluate this hypothesis, we analyze two datasets: Netflix click-stream traces \cite{follows2021netlix} and 
Ethereum transaction traces \cite{day2018ethereum}. In the Netflix dataset, each record corresponds to a user selecting a 
movie to watch and we interpret the timestamped events as a query trace over a movie database. The 
Ethereum dataset consists of all on-chain transactions from 2025-08-02 to 2025-08-13, in the execution order within the blockchain. We interpret these as queries to a database indexed by recipient 
wallet addresses.

\begin{figure}[t]
    \centering
    \input{figures/ethereum_temporal_histogram.pgf}
    \caption{Ethereum Transaction Data Temporal Distribution (Log-Log Plot)}
    \label{fig:eth_dist}
\end{figure}

To characterize temporal locality, Figures~\ref{fig:eth_dist} and~\ref{fig:netflix_dist} plot, for 
each $x$, the number of access pairs to the same item that are separated by exactly $x$ intervening 
accesses, aggregated over all items. For example, at $x=10$, the $y$-axis counts the number of times an 
item is accessed and then re-accessed after 10 accesses to other items. We additionally plot the 
Zipf distribution with exponent $s$ that best fits each empirical distribution. Although neither 
dataset follows a strict Zipf distribution, both exhibit strong similarity. 
This empirical regularity is sufficient for Cloak to achieve high efficiency in practice.

Motivated by this observation, we compute budgets using Algorithm~\ref{alg:calc_budgets}. The algorithm 
takes the following inputs: the number of elements $N$ in the storage server and an initial budget $first\_budget$ for the most 
recent reuse-distance set. It initializes the budget list with $first\_budget$ and assigns subsequent 
budgets by decreasing $first\_budget$ according to a Zipf distribution with exponent $s=1.0$. The loop terminates once the 
resulting budget list is sufficient to cover the entire database, and the function returns the budget
list.

\begin{figure}[t]
    \centering
    \input{figures/netflix_temporal_histogram.pgf}
    \caption{Netflix Click Stream Data Temporal Distribution (Log-Log Plot)}
    \label{fig:netflix_dist}
\end{figure}
The resulting budget list jointly determines the batch size and the sizes of all reuse-distance sets. 
Since each batch draws $budget[i]$ number of items from a reuse-distance set $i$, the batch size is
$|B| = \sum {budgets}$. Whereas, a reuse-distance set $s$ starts with $|B|$ entries, and every 
subsequent batch removes a decreasing yet fixed number of items from it, where the decrease is dictated
by the Zipf distribution. Thus, the size of set $s_t$ with reuse distance $t$ can be computed as $|s| = 
|B| - \sum_{i=0}^{t_{\mathrm{current}} - t} {budgets}[i]$. Because the database is partitioned into 
reuse-distance sets, the budget list also determines the total database size. Summing over all reuse-distance sets yields
\[
N = \sum |s_t| = \sum_{i=1}^{|budgets|} budgets[|budgets| - i + 1] \cdot i.
\]
From this relationship, we obtain the lower bound $|B| \ge \mathcal{O}(\sqrt{N})$. However, for 
temporally distributed client queries, much of this batch can be filled with real queries, as we
show in Section~\ref{sec:eval}. This constraint makes Cloak most suitable for high-throughput
workloads, where batching overhead can be amortized over many concurrent queries.

\begin{algorithm}
\caption{Set Budgets}
\label{alg:calc_budgets}
\begin{algorithmic}[1]
\Procedure{SetBudgets}{$N,$ $first\_budget$}
    \State $budgets \gets []$
    \State $total \gets 0$
    \State $i \gets 1$
    \While{$total < N$}
    \State $budget \gets \lceil first\_budget / i \rceil$
    \State $budgets.append(budget)$
    \State $total \gets total + (budget * i)$
    \State $i \gets i+1$
    \EndWhile
    \State return $budgets$
\EndProcedure
\end{algorithmic}
\end{algorithm}

\section{Evaluation}
\label{sec:eval}
\subsection{Setup}
We implement \system in Rust with $\sim1900$ lines of code. The system has 3 binaries, 
client, proxy, and server, that connect to each other with TCP connections. All
components are deployed on AWS across three m6a.4xlarge machines with 64GB RAM and
16 vCPUs. Cloak's implementation utilizes multiple threads to parallelize execution
where safe, with a default of 16 threads set for experiments. Each reported data point
in the experiments is an average of 5 runs to account for variations in a multi-tenant cloud.
We use OpenSSL Library for encryption and use AES-128-GCM as the encryption scheme.

\textbf{Baselines.} We compare \system with three baselines: (i) an unsafe baseline without any
encryption or obliviousness logic, where clients simply read and write elements from/to an in-memory
storage server. The unsafe baseline does not utilize batching or caching for query processing.
(ii) Treebeard~\cite{setayesh2025treebeard}, which
acts as a parallelized ORAM-based baseline. For a fair comparison with \system, we run Treebeard 
with a scale factor of 1 (i.e., no sharding for scalability), and deploy its storage server in
one machine, all three components of its proxy in a second machine, and the client on a third 
machine. (iii) Waffle~\cite{maiyya2023waffle}, which acts as an oblivious baseline with weaker
guarantees than \system (due to its more restrictive threat model). Similar to Treebeard, the
deployment places the storage server, proxy, and the client on three different machines.

\textbf{Datasets.} We compare all four systems on three datasets: Netflix \cite{follows2021netlix}, Ethereum \cite{day2018ethereum}, and synthetic datasets, with Table~\ref{tab:throughput-latency} describing the
dataset sizes. Section \ref{sec:setting-budgets} describes the details on the two real-world 
datasets.
The real-world datasets contain only read queries, while the synthetic dataset contains 50\% reads and 50\% writes.
To generate a temporally-distributed synthetic query trace, we repeatedly sample a value from a Zipf 
distribution and query the element whose current reuse distance equals the sampled value, updating all reuse 
distances after each query. This process continues until the trace reaches the desired length. 

\textbf{Configurable parameters.} Unless otherwise noted, the following are the default values
for \system's configurable parameters. The experiments set the cache size to 1000 elements,
the batching interval to 20 ms, and the batch size to 4{,}000 for all the experiments except 
the ones for the Netflix dataset, for which we set the cache size to 100, batching interval to 5 ms, and the batch size to 520 due to its small number of elements. The size of each element (in real or 
synthetic data) is set to 1KB. Because \system responds to queries, if possible, from the request
queue (Section~\ref{sec:protocol}), by default we set the queue size to be twice the batch size.
Finally, the budget size distribution in all experiments follow a discrete Zipf probability density function with the exponent $s=1.0$, calculated with the Algorithm \ref{alg:calc_budgets}. In other
words, \system anticipates the input distribution to follow Zipf $s=1.0$ by default and allocates
budgets to reuse-distance sets accordingly.

\begin{table}[t]
\renewcommand{\arraystretch}{1.2}
\centering
\caption{Throughput and latency across datasets and systems.}
\begin{tabular}{  w{c}{3.5em} w{c}{3.5em} w{l}{3.5em} m{4em} m{3.3em} }
\toprule
\textbf{Dataset} & \textbf{\#Elements} & \textbf{System} & \makecell{\textbf{Tpx}\\ \textbf{(ops/s)}} & \textbf{Latency (ms)} \\
\midrule[\heavyrulewidth]
\multirow{4}{*}{Netflix} & \multirow{4}{*}{8,473}   & unsafe     & 173,459 & 96 \\\cline{3-5}
        & & cloak      & 162{,}369 & 69 \\\cline{3-5}
        & & treebeard  & 58{,}823 & 166 \\\cline{3-5}
        & & waffle     & 34{,}415 & 29\footnotemark[1] \\
\midrule[\heavyrulewidth]
\multirow{4}{*}{Ethereum} & \multirow{4}{*}{701,314}  & unsafe     & 185{,}572 & 157 \\\cline{3-5}
        & & cloak      & 169{,}344 & 203 \\\cline{3-5}
        & & treebeard  & 23{,}935 & 416 \\\cline{3-5}
        & & waffle     & 25{,}650 & 39\footnotemark[1] \\
\midrule[\heavyrulewidth]
\multirow{4}{3.3em}{Synthetic Z.~exp:~1} & \multirow{4}{*}{1M} & unsafe     & 157{,}036 & 85 \\\cline{3-5}
        & & cloak      & 147,620 & 94 \\\cline{3-5}
        & & treebeard  & 19,370 & 520 \\\cline{3-5}
        & & waffle     & 24,994 & 40\footnotemark[1] \\
\bottomrule
\end{tabular}
\label{tab:throughput-latency}
\end{table}
\footnotetext[1]{Waffle uses an aggregate latency calculation method that produce very optimistic results.}

\subsection{Results}
\label{sec:results}
\subsubsection{Baseline Comparisons}

Table~\ref{tab:throughput-latency} summarizes the throughput and latency of all four systems across the 
three datasets. Cloak retains $91\%$--$94\%$ of the unsafe baseline's throughput on all three workloads, while providing formal obliviousness guarantees.
On the Netflix dataset, Cloak reaches 162{,}369 ops/s compared to the unsafe baseline's 173{,}459 ops/s, and even achieves lower latency (69\,ms vs.\ 96\,ms) due to caching effects. On the larger Ethereum and synthetic datasets, Cloak sustains 169{,}344 and 147{,}620 ops/s, respectively, with only modest latency increases over the unsafe baseline.

Compared to Treebeard, which provides ORAM-level security, Cloak delivers $2.8\times$--$7.6\times$ higher throughput and $2\times$--$5.5\times$ lower latency. Moreover, Treebeard's throughput degrades with dataset size, from 58{,}823 ops/s on Netflix ($N \approx 8{,}500$) to 19{,}370 ops/s on the synthetic dataset ($N = 10^6$), consistent with the $\mathcal{O}(\log N)$ overhead inherent in tree-based ORAMs. In contrast, Cloak's overhead relative to the unsafe baseline remains nearly constant across all three dataset sizes.

Cloak also outperforms Waffle by $4.7\times$--$6.6\times$ in throughput, despite Waffle providing strictly 
weaker security, as discussed in Section~\ref{sec:related-work}. The experiments set Waffle's batch size
to be same with \system, with up to 50\% real requests per batch, which is the medium security option 
suggested in~\cite{maiyya2023waffle}. Waffle's performance is significantly
worse that Cloak because (i) although Waffle utilizes a cache to store frequently queries elements, it 
does not respond to clients until after communicating with the server, and (ii) its batching logic is not
optimized for temporal queries, injecting a constant rate of dummy queries in every batch.
We note that Waffle's reported latencies are computed using an aggregate method that produces optimistic estimates, making a direct latency comparison misleading.

\subsubsection{Ablation Experiments}
Having established Cloak's performance with the state of the art baselines, the next set of experiments
analyze how different parameter selection affects \system's performance.

\begin{figure}[t]
    \centering
    \input{figures/throughput_vs_zipf_s.pgf}
    \caption{Throughput, mean latency, and mean batch utilization for different Zipf distributions ($N = 10^6$, Cache Size~$= 1{,}000$, Batch Size: $= 4{,}000$, Element Size $= 1$KB, Proxy Queue Capacity $= 2\times \text{Batch Size}$)}
    \label{fig:zipf-exponent}
\end{figure}

\textbf{Effect of workload skew.} \system achieves optimal performance when its estimate of the input
distribution closely matches the actual client query distribution. To study this effect, we fix the
anticipated Zipf skew at $s=1.0$ and vary the input workload from Zipf $s=0.0$ (uniform) to $s=2.0$.
Figure~\ref{fig:zipf-exponent} reports three metrics: throughput, end-to-end mean query latency, and
batch utilization. Batch utilization measures \system’s ability to populate each batch (on average)
with real client requests, and hence, higher utilization is desirable.

Figure~\ref{fig:zipf-exponent} indicates that throughput increases monotonically with the Zipf exponent, 
rising from roughly 30{,}000 ops/s under a near-uniform workload ($s=0$) to approximately 150{,}000 ops/s once $s \ge 1.0$, where it plateaus. The batch utilization graph (bottom panel) corroborates
this behavior, where utilization peaks at roughly $73\%$ near $s = 1.0$. At this point, the workload’s
temporal skew closely matches Cloak’s target distribution, which determines the budget allocation,
allowing \system to fill most of a batch with real requests. The sharp increase in throughput between
$s=0.8$ and $s=1.0$ corresponds directly to this improved alignment between the workload distribution
and the configured budgets.

For lower exponents ($s < 0.5$), the mismatch between the near-uniform workload and \system’s skewed
budget allocation reduces batch utilization to approximately $35\%$, leading to lower throughput.
Somewhat counter-intuitively, this correlation weakens or drops for $s > 1.0$: although batch utilization
declines, throughput remains near its peak. This occurs because highly skewed workloads repeatedly
access a small set of items, many of which are served from the cache or from recent
reuse-distance sets. As a result, batches include more dummy requests (particularly for older
reuse-distance sets), reducing utilization, while throughput remains high because most real requests
target cached or recently accessed elements.

Latency mirrors these trends: for $s \ge 1.0$, mean latency stabilizes at roughly 100--120\,ms, whereas for lower exponents it is both higher and more variable (spiking above 300\,ms) due to queue buildup when low batch utilization limits the number of real queries served per batch. These results demonstrate that Cloak is most efficient and effective when the its anticipated temporality matches the real 
distribution.

\begin{figure}[t]
    \centering
    \input{figures/stats_batch_size.pgf}
    \caption{Throughput, mean latency, and mean batch utilization for different batch sizes ($N = 10^6$, Cache Size $= 1{,}000$, Zipf Exponent $= 1.0$, Element Size $= 1$KB, Proxy Queue Capacity $= 2\times \text{Batch Size}$)}
    \label{fig:batch-size}
\end{figure}

\textbf{Effect of batch size.} The previous experiment fixed a batch size and varied the input distribution, whereas this experiment reverses the order: 
it fixes the input to Zipf distribution with exponent $s=1.0$ and studies how \system performs as the batch size changes. 
Figure~\ref{fig:batch-size} depicts the three performance metrics for this experiment. Throughput peaks at roughly 145{,}000 ops/s around a batch size of 4{,}000 and declines for both smaller and larger batches. Batch utilization follows a similar pattern, peaking at approximately $74\%$ near batch size 4{,}000 and decreasing steadily for larger batches.

Small batches limit throughput because fewer real queries can be served per batch, even though individual batches are processed quickly (latency is lower for smaller batches). Increasing the batch size initially improves throughput by providing more budget slots that can be filled with real queries. Beyond the optimum, however, the fixed 20 ms batching interval becomes a bottleneck: larger batches 
require more time for encryption, shuffling, and network transfer, and there aren't enough client 
queries accumulated within the batching interval to amortize this cost over. As a result, not only the 
batch utilization drops, but throughput also decreases due to more wasted compute and bandwidth overheads. Latency also rises---reaching roughly 140\,ms at a batch size of 11{,}000---as per-batch processing time grows. These results suggest that, for a given workload distribution, there exists a sweet spot for batch size that balances the number of available budget slots against the proxy's ability to fill and process them.

\begin{figure}[t]
    \centering
    \input{figures/stats_element_size.pgf}
    \caption{Throughput, mean latency, and mean batch utilization for different Element sizes ($N = 10^6$, Cache Size $= 1{,}000$, Zipf Exponent $= 1.0$, Batch Size: $= 4{,}000$, Proxy Queue Capacity $= 2\times \text{Batch Size}$)}
    \label{fig:element-size}
\end{figure}

\textbf{Effect of element size.} Next, we analyze the effect of element size variability on \system's
performance. Figure~\ref{fig:element-size} varies the element size from 0.5\,KB to 32\,KB on a log 
scale. Throughput decreases monotonically, from 162{,}425 ops/s at 0.5\,KB to 2{,}490 ops/s at 32\,KB, approximately halving with each doubling of element size. This is expected: since each batch transfers a fixed number of elements, the total data volume per batch scales linearly with element size, and the per-batch cost of encryption, shuffling, and network transfer grow accordingly. With the batching interval fixed at 20\,ms, larger elements reduce both throughput and batch utilization, with the latter dropping from $70\%$ at 1\,KB to $20\%$ at 32\,KB. Latency follows the inverse trend, rising from 51\,ms at 1\,KB to over 300\,ms at 16\,KB.

\begin{figure}[t]
    \centering
    \input{figures/stats_cache_size.pgf}
    \caption{Mean latency, and mean batch utilization for different cache sizes ($N = 10^6$, Block Size $= 1$KB, Zipf Exponent~$= 1.0$, Batch Size: $= 4{,}000$, Client Query Rate $= 115{,}000$ ops/s, Proxy Queue Capacity $= 1\times \text{Batch Size}$)}
    \label{fig:cache-size}
\end{figure}

\textbf{Effect of cache size.} Finally, we study the impact of cache size on \system’s performance,
with a particular focus on latency. This analysis presents a subtle challenge: as described in
Section~\ref{sec:protocol}, the proxy stops accepting new client queries when its request queue
reaches capacity. When saturated, new requests are blocked, making it difficult to isolate the
effect of cache size. To avoid this challenge, we fix the client request rate below the
proxy’s maximum sustainable throughput, specifically at 115,000~ops/s.

Figure~\ref{fig:cache-size} reports mean latency and batch utilization as the cache size increases
(we omit throughput since it is fixed by the client at 115,000~ops/s). As the cache grows from
$1{,}000$ to $256{,}000$ entries, mean latency drops from 27 ms to 12.5 ms, because an increasing
fraction of queries are served directly from the cache without incurring a batch round-trip to the
server. Batch utilization decreases correspondingly, from $59\%$ to $48\%$, since fewer real queries
enter the batching pipeline. These results highlight the role of caching in reducing end-to-end
latency.

\section{Conclusion}

We presented Cloak ORAM, a trusted-proxy--based oblivious access protocol that achieves low overhead by 
decoupling obliviousness from uniformity. Instead of forcing server-side accesses to follow a uniform 
distribution, Cloak enforces a fixed, temporally skewed access pattern that better matches real-world 
workloads exhibiting temporal locality. This design preserves access-pattern independence while 
significantly reducing the number of dummy requests required in practice to achieve obliviousness. Our 
evaluation on real-world traces shows that Cloak can achieve overheads as low as $1.1\times$ relative to 
simple unencrypted storage. We believe Cloak provides a practical foundation for deploying access-pattern 
hiding in high-throughput settings and strikes a balance between cryptographic guarantees and realistic 
workload behavior. Cloak can be extended to support richer query workloads (e.g., range and join 
queries) and to dynamically adapt its batching interval to changing client input rate without compromising security using differential privacy.

\section*{Acknowledgments}
We gratefully acknowledge the support of National Cybersecurity Consortium, the Government of Canada (CSIN) (project id 2024-1345), NSERC grants RGPIN-2023-03244, IRC-537591, the Government of Ontario and the Royal Bank of Canada for funding this research.
\section*{Ethics Considerations.}
Our work used no sensitive, private, or personally identifiable data in designing or 
evaluating of Cloak. All experiments are conducted using synthetically generated workloads or publicly
available de-identified datasets. Because of this, when cloud resources were used for evaluation, only 
synthetic data or publicly available data were processed, ensuring that no sensitive information is 
disclosed to third-party service providers.

Cloak is designed to enhance access-pattern privacy by reducing information leakage through observable 
query timing and frequency. While these techniques are motivated by legitimate security and privacy 
goals, we acknowledge the potential for dual use. In particular, adversarial parties could deploy 
similar mechanisms to conceal undesirable or malicious behavior from monitoring systems. However, we 
believe that the benefits of advancing access-pattern privacy, especially for applications handling 
sensitive data, outweigh the potential for misuse.

Privacy-preserving data management systems play an important role in mitigating harms 
associated with data misuse and unintended information disclosure, which disproportionately affect users 
in sensitive domains such as healthcare, finance, and government services. By improving the efficiency 
and practicality of access-pattern hiding mechanisms, Cloak contributes to the broader goal of 
protecting user confidentiality in adversarial settings. We believe that the ethical implications 
of the project aligns with the principles outlined in the USENIX Security ethics guidelines.

\section*{Open Science}

We open-source all the implementation and experiments and are publicly available at: \url{https://github.com/onurerenarpaci/Cloak.git}

\bibliographystyle{plain}
\bibliography{references}

\end{document}